\newif\ifready\readyfalse
\pdfoutput=1

\documentclass[acmsmall,nonacm]{}
\usepackage[utf8]{inputenc}
\usepackage{graphicx}
\usepackage{booktabs}
\usepackage{amsthm}
\usepackage{amsmath}
\usepackage{relsize}
\usepackage{algorithmicx}
\usepackage{algorithm}
\usepackage[noend]{algpseudocode}
\usepackage{multicol}
\usepackage[most]{tcolorbox}
\usepackage{color}
\usepackage{pgfplots}
\usepackage{subcaption}
\pgfplotsset{
    compat=1.3,
    legend image code/.code={
        \draw [#1] (0cm,-0.1cm) rectangle (0.6cm,0.1cm);
    },
}
\usepackage{thmtools}
\usepackage{thm-restate}
\usepackage{enumerate}
\usepackage{enumitem}
\setlist{noitemsep,topsep=0pt,parsep=0pt,partopsep=0pt}
\usepackage{mathtools}
\usepackage{xspace}

\usepackage[capitalize,nameinlink]{cleveref}

\theoremstyle{plain}
\newtheorem{theorem}{Theorem}[section]

\newtheorem{lemma}[theorem]{Lemma}

\newtheorem{definition}[theorem]{Definition}

\newcommand{\defn}[1]{\textbf{\textit{#1}}}

\crefname{theorem}{Theorem}{Theorems}
\Crefname{lemma}{Lemma}{Lemmas}
\Crefname{claim}{Claim}{Claims}
\Crefname{observation}{Observation}{Observations}
\Crefname{algorithm}{Algorithm}{Algorithms}
\Crefname{myalgctr}{Algorithm}{Algorithms}
\Crefname{challenge}{Challenge}{Challenges}

\algrenewcommand\algorithmicindent{1em}%

\setlength\parindent{\parindent}
\DeclarePairedDelimiter\ceil{\lceil}{\rceil}

\usepackage{xcolor}
\DeclareMathOperator*{\argmin}{arg\,min}

\DeclareMathOperator{\poly}{poly}

\definecolor{mygreen}{RGB}{20,140,80}
\definecolor{linkcolor}{RGB}{0,0,230}
\definecolor{mylightgray}{RGB}{230,230,230}
\definecolor{verylightgray}{RGB}{245,245,245}

\newcommand{\etal}[0]{et al.\xspace}

\newcounter{myalgctr}

\newtcolorbox{OuterBox}[1][]{%
    breakable,
    enhanced,
    frame hidden,
    interior hidden,
    left=-5pt,
    right=-5pt,
    top=-5pt,
    float=p,
    boxsep=0pt,
    arc=0pt
#1}%

\newtcolorbox{InnerBox}[1][]{%
    enforce breakable,
    enhanced,
    colback=gray,
    colframe=white,
#1}%

\newenvironment{tbox}{
\vspace{0.2cm}
\begin{tcolorbox}[width=\columnwidth,
                  enhanced,
                  boxsep=2pt,
                  left=1pt,
                  right=1pt,
                  top=4pt,
                  boxrule=1pt,
                  arc=0pt,
                  colback=white,
                  colframe=black,
	              breakable
                  ]%
}{
\end{tcolorbox}
}

\newcommand{\tboxhrule}[0]{\vspace{0.1cm} {\color{black} \hrule} \vspace{0.2cm}}

\newenvironment{titledtbox}[1]{\begin{tbox}#1 \tboxhrule}{\end{tbox}}

\newcommand{\eps}{\varepsilon}

\newcommand{\geom}{\mathsf{Geom}}

\newcommand{\fact}{\log^2 n}

\newcommand{\draw}{\geom(\exp(\eps/\fact))}

\algblock{PhaseOne}{EndPhaseOne}
\algblock{PhaseTwo}{EndPhaseTwo}
\algblock{Maintain}{EndMaintain}

\algnewcommand\algorithmicphaseone{\textbf{Phase 1:}}
\algnewcommand\algorithmicphasetwo{\textbf{Phase 2:}}
\algnewcommand\algorithmicmaintain{\textbf{Maintain:}}

\algrenewtext{PhaseOne}[1]{\algorithmicphaseone\ #1}
\algrenewtext{PhaseTwo}[1]{\algorithmicphasetwo\ #1}
\algrenewtext{Maintain}[1]{\algorithmicmaintain\ #1}

\algtext*{EndIf}
\algtext*{EndFor}
\algtext*{EndWhile}
\algtext*{EndPhaseOne}
\algtext*{EndPhaseTwo}
\algtext*{EndMaintain}

\title{Brief Announcement: Improved Massively Parallel Triangle Counting in $O(1)$ Rounds}

\author{Quanquan C. Liu}
\affiliation{%
  \institution{Yale University}
  \city{New Haven}
  \country{USA}}
\email{quanquan.liu@yale.edu}

\author{C. Seshadhri}
\affiliation{%
  \institution{University of California, Santa Cruz}
  \city{Santa Cruz}
  \country{USA}
}
\email{sesh@ucsc.edu}

\date{}

\begin{abstract}
In this short note, we give a novel algorithm for $O(1)$ round triangle counting in bounded arboricity graphs.
Counting triangles in $O(1)$ rounds (exactly) is listed as one of the interesting remaining open problems in the recent 
survey of Im \etal~\cite{im2023massively}.
The previous paper of 
Biswas et al.\ [BELMR20], which achieved the best bounds under this setting, used $O(\log \log n)$ 
rounds in sublinear space per machine and $O(m\alpha)$ total 
space where $\alpha$ is the arboricity of the graph and $n$ and $m$ are the number of vertices 
and edges in the graph, respectively. Our new algorithm is very simple, achieves the optimal $O(1)$ rounds without increasing the 
space per machine and the total space,
and has the potential of being easily implementable in practice.
\end{abstract}

\begin{document}
\renewcommand{\algorithmicrequire}{\textbf{Input:}}
\renewcommand{\algorithmicensure}{\textbf{Output:}}
\algblock{ParFor}{EndParFor}
\algblock{Input}{EndInput}
\algblock{Output}{EndOutput}
\algblock{ReduceAdd}{EndReduceAdd}

\algnewcommand\algorithmicparfor{\textbf{parfor}}
\algnewcommand\algorithmicinput{\textbf{Input:}}
\algnewcommand\algorithmicoutput{\textbf{Output:}}
\algnewcommand\algorithmicreduceadd{\textbf{ReduceAdd}}
\algnewcommand\algorithmicpardo{\textbf{do}}
\algnewcommand\algorithmicendparfor{\textbf{end\ input}}
\algrenewtext{ParFor}[1]{\algorithmicparfor\ #1\ \algorithmicpardo}
\algrenewtext{Input}[1]{\algorithmicinput\ #1}
\algrenewtext{Output}[1]{\algorithmicoutput\ #1}
\algrenewtext{ReduceAdd}[2]{#1 $\leftarrow$ \algorithmicreduceadd(#2)}
\algtext*{EndInput}
\algtext*{EndOutput}
\algtext*{EndIf}
\algtext*{EndFor}
\algtext*{EndWhile}
\algtext*{EndParFor}
\algtext*{EndReduceAdd}

\maketitle
\sloppy

\section{Introduction}
In this short extended abstract, we study the triangle counting problem in the Massively Parallel Computation (MPC) model.
Given a simple, undirected input graph, $G = (V, E)$, the total triangle count is the number of cycles of length three in the graph. This problem
has a wide variety of applications including community detection, spam detection, link recommendation, and social network analysis. 
Triangle counting and enumeration play key roles in database joins and are among the most important problems in database theory.
Because of these applications, we often need to solve the problem on massive graphs and datasets.
For a sample of works on triangle counting on large graphs, both in theory and practice, 
see~\cite{al2018triangle,arifuzzaman2013patric,Aliak,AKK,bar2002reductions,BELMR20,bera2017towards,Dhulipala2021,
ELRS15,ERS20,kane2012counting,kolda2014counting,lai2015scalable, mcgregor2016better,
pagh2012colorful,park2014mapreduce,
ST15,suri2011counting} and references therein. 

The Massively Parallel Computation (MPC) model~\cite{Beame13,GSZ11,karloff2010MapReduce} 
is one of the main models for modeling large distributed systems capable
of processing graphs with billions or even trillions of edges. These systems include
MapReduce \cite{dean2008mapreduce}, Hadoop \cite{white2012hadoop}, Spark \cite{zaharia2010spark} and Dryad \cite{isard2007dryad}.
In the MPC model, we are given $M$ machines each with 
$S$ memory. The initial input is partitioned arbitrarily across the machines.
Computation is performed using a number of synchronous rounds. In each round, first, local computation is 
performed in each machine using the data that is stored in the machine. Then, data is sent between machines synchronously. 
No machine can receive or send more than $S$ data. The complexity measures we care about in this setting are the number of 
rounds of communication, $R$, the space per machine, $S$, and the total memory used by the computation, which is equal to $M \cdot S$.
This model has gained much interest within the distributed and parallel communities with an abundance of research works published
within just the past few years. For a survey of such works, see~\cite{im2023massively} and references therein.

In this paper, we give a novel algorithm for constant-round triangle counting in bounded arboricity graphs. Bounded 
arboricity graphs are graphs whose edges can be decomposed
into a small number of forests. Specifically, a graph has
arboricity $\alpha$ if the set of edges in the graph 
can be decomposed into at most $\alpha$ forests. Various
papers have demonstrated that a large number of real-world graphs have very small arboricity~\cite{dhulipala2017julienne,shin2018patterns}.
The best previous result of Biswas et al.~\cite{BELMR20} for bounded arboricity graphs
achieved $O(\log \log n)$ rounds in sublinear space per machine and $O(m\alpha)$ total space where $\alpha$ is the arboricity of the graph.
Despite the importance of this problem for a wide variety of applications, 
triangle counting has remained elusive in the MPC model with either $O(1)$ round algorithms
suffering from lower bounds on the count to guarantee good estimates or the best algorithms needing $\omega(1)$ rounds for an exact 
count. In this short note, we give a simple algorithm for the bounded arboricity setting where our algorithm returns an
exact count of the number of triangles in $O(1)$ rounds, $O(n^\delta)$ space per machine for any constant $\delta > 0$,
and using near-linear total space when $\alpha$
is small. An added bonus is that our algorithm is very simple, making it practically implementable for 
real-world networks, where $\alpha$ tends to be small.

\section{Preliminaries}

We formally define the MPC model and arboricity in this section.

\begin{definition}[Arboricity]\label{def:arboricity}
    The \defn{arboricity}, $\alpha$, of a graph $G = (V, E)$ is the minimum number of forests needed to decompose the edges of $G$.
\end{definition}

\begin{definition}[Massively Parallel Computation (MPC) Model]\label{def:mpc-model}
    In the massively parallel computation (MPC) model, there are $M$ machines which communicate with each other in synchronous rounds.
    The input graph, $G = (V, E)$, is initially partitioned across the machines arbitrarily. Each machine has $S$ space and performs
    the following (in order) during each round:
    \begin{enumerate}
        \item Each machine performs (unbounded) local computation using data stored within the machine. (Most reasonable algorithms will not
        use a large amount of time.)
        \item At the end of the round, machines exchange messages synchronously to inform the computation for the next round. The total size
        of messages sent or received by a machine is upper bounded by $S$. 
    \end{enumerate}

    We seek to minimize $S$, the number of rounds of communication, and the total space $S \cdot M$.
    There are three domains for the size of $S$: (i) \emph{Sublinear}: $S = n^{\delta}$ for some constant $\delta \in (0, 1)$;
    (ii) \emph{Near-linear}: $S = \Theta(n\poly(\log n))$; 
    (iii) \emph{Superlinear}: $S = n^{1 + \delta}$ for constant $\delta \in (0, 1)$.
\end{definition}

Throughout, we denote the degree of a vertex $v \in V$ by $\deg(v)$.

\section{$O(1)$ Round Exact Triangle Counting}

In this section, we give a very simple algorithm for counting the exact number of triangles in an input graph $G = (V, E)$ in $O(n^{\delta})$ space per machine
for any constant $\delta > 0$. The main idea behind our algorithm consists of enumerating the wedges adjacent to the lower degree endpoint of every edge.
We first show that the total memory necessary to perform such an enumeration is bounded by $O(m\alpha)$ where $\alpha$ is the arboricity of the input graph
$G = (V, E)$. Chiba-Nishizeki~\cite{CN85} showed the following lemma that bounds the sum of the minimum of the degrees of the endpoints of every edge. For completeness, we include the proof of~\cref{lem:bound-min-deg} in~\cref{app:chiba-nishizeki}.

\begin{lemma}[Chiba-Nishizeki Sum of Minimum Degree Endpoints~\cite{CN85}]\label{lem:bound-min-deg}
    Given an input graph $G = (V, E)$ with arboricity $\alpha$, it holds that $\sum_{(u, v) \in E} \min\left(\deg(u), \deg(v)\right) \leq 2m\alpha$.
\end{lemma}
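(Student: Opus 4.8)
The standard proof orients each edge toward its lower-degree endpoint (breaking ties by vertex ID), so that $\min(\deg(u),\deg(v))$ is charged to the tail of the oriented edge. Under this orientation, $\sum_{(u,v)\in E}\min(\deg(u),\deg(v)) = \sum_{v\in V} \deg(v)\cdot d^{+}(v)$, where $d^{+}(v)$ is the out-degree of $v$. So it suffices to show $\sum_{v} \deg(v)\, d^{+}(v) \le 2m\alpha$. The key observation is that for any edge $v \to w$ oriented out of $v$, we have $\deg(v) \le \deg(w)$, and more usefully, if $v$ has out-degree $d^{+}(v)$, then $v$ has at least $d^{+}(v)$ neighbors each of degree $\ge \deg(v)$; but the cleaner route is to use the arboricity decomposition directly.

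Here is the argument I would write out. Decompose $E$ into forests $F_1,\dots,F_\alpha$. In each forest $F_i$, root every tree arbitrarily; this gives each vertex $v$ at most one parent in $F_i$, hence at most $\alpha$ parents total over all forests. Now orient each edge of $G$ toward the endpoint of larger degree (ties by ID). I claim $d^{+}(v) \le \alpha$ for every $v$: actually this is not quite immediate from the forest decomposition alone, so instead I directly bound $\sum_v \deg(v) d^+(v)$. Write $\sum_{(u,v)\in E}\min(\deg(u),\deg(v)) = \sum_i \sum_{(u,v)\in F_i}\min(\deg(u),\deg(v))$. Within a single forest $F_i$, root each tree and charge each non-root vertex to its parent edge: $\sum_{(u,v)\in F_i}\min(\deg(u),\deg(v)) \le \sum_{(u,v)\in F_i}\deg(\text{child}(u,v))$ — wait, we need $\min$ on one side. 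Here is the clean bound: in $F_i$, for edge $e=(u,v)$ with parent $p$ and child $c$, $\min(\deg(u),\deg(v)) \le \deg(c)$, so $\sum_{e\in F_i}\min \le \sum_{c \text{ non-root}} \deg(c) \le \sum_{v\in V}\deg(v) = 2m$. Summing over the $\alpha$ forests gives $\sum_{(u,v)\in E}\min(\deg(u),\deg(v)) \le 2m\alpha$, as desired.

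The one subtlety to get right — and the step I'd treat as the main obstacle — is justifying $\min(\deg(u),\deg(v)) \le \deg(\text{child})$: this is false in general since the child could be the higher-degree endpoint. The fix is to not use an arbitrary rooting but to instead orient each forest edge toward its higher-degree endpoint (breaking ties by ID), which is consistent with a valid rooting of each tree in $F_i$ precisely because "higher degree, ties by ID" is a total order, so following it never creates a directed cycle within the tree; thus each vertex is the lower endpoint of at most... no — a vertex can be the lower-degree endpoint of many edges in $F_i$. So the right statement is: orient toward higher degree, and then $\min(\deg(u),\deg(v))$ equals the degree of the \emph{tail}; each vertex $v$ is a tail of at most one edge per forest only if the orientation forms a forest pointing "up", which it does not automatically. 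The genuinely correct and simple version (the one I would commit to) charges differently: for each edge $(u,v)$, let it be oriented $u\to v$ with $\deg(u)\le\deg(v)$; then $\min(\deg(u),\deg(v))=\deg(u)$, and $\sum_{(u,v)\in E}\deg(u) = \sum_{u} \deg(u)\cdot d^+(u)$ where $d^+(u)$ counts neighbors of degree $\ge \deg(u)$. Now bound $d^+(u)\le \deg(u)$ trivially is too weak; instead use that in forest $F_i$ each vertex has out-degree... the slick finish: $d^+(u) = |\{v\sim u : \deg(v)\ge\deg(u)\}|$, and since those $v$ all have degree $\ge\deg(u)$, standard counting or the forest decomposition applied to the subgraph induced by this orientation yields $\sum_u \deg(u) d^+(u)\le 2\alpha m$. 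I would present the forest-decomposition version cleanly by choosing, within each $F_i$, the rooting induced by the degree order (root = locally maximal-degree vertex of each component under the order), so that every non-root vertex $c$ has a parent of degree $\ge\deg(c)$, giving $\min(\deg(\text{parent}),\deg(c))=\deg(c)$ and hence $\sum_{e\in F_i}\min \le \sum_{c}\deg(c)\le 2m$; summing over $\alpha$ forests closes the proof.
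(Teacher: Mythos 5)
Your middle paragraph --- decompose $E$ into forests $F_1,\dots,F_\alpha$, root each tree of $F_i$ \emph{arbitrarily}, and bound $\sum_{e \in F_i} \min(\deg(u),\deg(v)) \le \sum_{c\ \text{non-root}} \deg(c) \le \sum_{v\in V}\deg(v) = 2m$ --- is exactly the paper's proof, and it is correct as written. The step you then flag as the "main obstacle," namely $\min(\deg(u),\deg(v)) \le \deg(\mathrm{child})$, is not false in general: the minimum of two numbers is at most \emph{either} of them, so it is at most the degree of the child endpoint regardless of whether the child is the higher- or lower-degree endpoint. You only run into trouble if you try to establish the stronger claim $\min(\deg(u),\deg(v)) = \deg(\mathrm{child})$, which the argument never needs. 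Since each non-root vertex is the child of exactly one edge of its tree, the charging is injective within each forest, and summing over the $\alpha$ forests finishes the proof.

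The repair you actually commit to at the end has a genuine gap. Rooting each tree of $F_i$ at its maximum-degree vertex does \emph{not} guarantee that every non-root vertex has a parent of degree at least its own: take a tree that is the path $a - b - c$ with $\deg_G(a)=10$, $\deg_G(b)=2$, $\deg_G(c)=5$; rooting at $a$ makes $b$ the parent of $c$ even though $\deg(b) < \deg(c)$. Relatedly, orienting every forest edge toward its higher-degree endpoint is not consistent with any rooting of the tree (in the same example $b$ would acquire two parents), so the equality $\min(\deg(\mathrm{parent}),\deg(\mathrm{child})) = \deg(\mathrm{child})$ invoked in your last sentence can fail. None of this machinery is needed: revert to the arbitrary-rooting argument with the trivial inequality $\min(\deg(u),\deg(v)) \le \deg(\mathrm{child})$, which is complete and is precisely what the paper does.
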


Given~\cref{lem:bound-min-deg}, we give our MPC algorithm in~\cref{alg:triangle-count}. We use a 
number of MPC primitives which are listed below and have been shown~\cite{Goodrich11,BELMR20} 
to take $O(1/\delta)$ rounds (where $\delta> 0$) in $O(n^{\delta})$
space per machine, and $O(|N|)$ total space ($N$ is the input and $|N|$ is its size):

\begin{enumerate}
    \item \textsc{MPC-Sort}$(N)$: sorts a set $N$ of elements,
    \item \textsc{MPC-Count}$(N)$: counts the number of elements in the multiset $N$,
    \item \textsc{MPC-Duplicate}$(N)$: duplicates the elements in $N$,
    \item \textsc{MPC-Filter}$(N, M)$: returns the set of elements that are in \emph{both} $N$ and $M$,
    \item \textsc{MPC-CountDuplicates}$(N, i)$: given a list of tuples $N$, counts the number of copies of each distinct element in the multiset consisting of the element in the $i$-th index of each tuple; returns a sorted list of tuples $(element, count)$ where $element$ is a distinct element and $count$ is
    its count, and
    \item \textsc{MPC-TagCount}$(N, D)$: given a sorted list $N$ of constant-sized tuples and list of tuples $(element, count)$ in $D$ (where every element in any tuple in $N$ is counted in $D$); tag each element in each tuple in $N$ with its count in $D$.
\end{enumerate}

\setlength{\textfloatsep}{5pt}
\begin{algorithm}
\caption{$O(1)$-Round Exact Triangle Counting}\label{alg:triangle-count}
\begin{algorithmic}[1]
\Require{Graph $G = (V, E)$, constant $\delta > 0$.}
\Ensure{An exact count of the number of triangles in $G$ in $O(1)$ rounds, $O(n^{\delta})$ space per machine, and $O(m\alpha)$ total space.}
\State $E' \leftarrow \textsc{MPC-Duplicate}(E)$.\label{line:duplicate-edge}
\State $SortedE' \leftarrow \textsc{MPC-Sort}(E')$ by both endpoints of edges in $E'$.\label{line:sort-edges}
\State $D \leftarrow \textsc{MPC-CountDuplicates}(SortedE', 0)$ returning degree vector of each vertex.\label{line:adjacent}
\State $TaggedSortedE' \leftarrow \textsc{MPC-TagCount}(SortedE', D)$.\label{line:tag-count}
\For{edge $e = (u, v) \in E$}\label{line:for-loop}
    \State Let $w \leftarrow \argmin(\deg(u), \deg(v))$.\label{line:smaller-degree}
    \State Split $SortedE'[w]$ into size $n^{\delta}$ partitions: $P_{e, 1}, \dots, P_{e, \ceil{\deg(w)/n^{\delta}}}$.\label{line:partition}
    \State Send $e$ and each partition $P_{e, i}$ to a separate machine.\label{line:partition-to-machine}
\EndFor
\For{each machine $M$ containing an edge $e$ and partition $P_{e, i}$ of edges}\label{line:for-machine}
    \State Form wedges using $e$ and each edge in $P_{e, i}$.\label{line:form-wedge}
    \For{each wedge $(a, b, c)$}\label{line:for-wedge}
        \State Construct query $((a, c), e, M)$.\label{line:query}
    \EndFor
\EndFor
\State Let $Q$ be the set of all queries.\label{line:all-queries}
\State $T \leftarrow \textsc{MPC-Filter}(Q, E)$.\label{line:filter}
\State Return $(T, |T|/3)$.
\end{algorithmic}
\end{algorithm}

First, we assume that each vertex is assigned a unique index in $[n]$.
Our algorithm first finds the adjacent edges to each vertex by duplicating each edge in~\cref{line:duplicate-edge} and then 
sorting the edges by both endpoints (\cref{line:sort-edges}). In other words, for each edge $e = (u, v)$ and its duplicate $e' = (u, v)$,
we sort $e$ by $u$ and $e'$ by $v$. Then, we count the number of edges adjacent to each vertex in~\cref{line:adjacent} which allows
us to obtain the endpoint with smaller degree for each edge. Let $D$ be a vector 
containing the degree of each vertex. We then use our degrees stored in $D$ to tag
each edge in $SortedE'$ with its degree (\cref{line:tag-count}).
Then, for each edge (\cref{line:for-loop}), we find the endpoint with smaller degree in~\cref{line:smaller-degree} using $TaggedSortedE'$.
For the endpoint $w$ with smaller degree, we split the adjacency list for $w$ into chunks of size $n^{\delta}$ (\cref{line:partition}).
Let these partitions be $P_{e, 1}, \dots, P_{e, \ceil{\deg(w)/n^{\delta}}}$. Then, we send each $e$ and a partition $P_{e, i}$ to a separate 
machine (\cref{line:partition-to-machine}). Partitioning can be done in $O(1/\delta)$ MPC rounds by first counting the 
number of times $c_u$ each vertex $u$ is the lower degree endpoint, then duplicating each $\{u, v\}$ edge $c_u + c_v$ times and tagging
the duplicate with a unique $i \in [c_u]$ or $j \in [c_v]$; 
finally, we sort all (duplicated) edges by their 
smaller degree endpoint and tag, and partition the sorted list.

For each machine which contains an edge $e$ and corresponding partition $P_{e, i}$ (\cref{line:for-machine}), 
we form paths of vertex length $3$ (2 edges each), otherwise known as \defn{wedges}, between $e$ and each edge in $P_{e, i}$ (\cref{line:form-wedge}).
For each wedge $(a, b, c)$ where $b$ is the middle vertex (\cref{line:for-wedge}), the wedge is a triangle if edge $(a, c)$ exists in the graph.
Thus, we form a query for edge $(a, c)$ (\cref{line:query}). We tag the query with the edge $e$ and the machine $M$ to distinguish between 
different queries for the same edge. We define $Q$ to be the set of all queries (\cref{line:all-queries}).
We then determine the set of queries which are existing edges by using the MPC filter primitive (\cref{line:filter}). 
We use the filter to filter all queries $((a, c), e, M)$ where $(a, c)$ is an edge. The 
filter primitive can be implemented by sorting all queries together with the actual edges. Then, if the count of all
sorted elements that contain $(a, c)$ is greater than the number of queries for $(a, c)$, the edge $(a, c)$ exists.
The filter then returns all queries for which edge $(a, c)$ exists. We let this set of returned queries be $T$ (\cref{line:filter}).
From $(a, c)$ and $e$, one can enumerate the triangle. Finally, we return all filtered queries and the triangle count which is equal to the 
number of filtered queries divided by $3$. 

We now prove the number of rounds, space per machine, and total space used by~\cref{alg:triangle-count}.

\begin{theorem}\label{thm:triangle-count}
    \cref{alg:triangle-count} outputs the set of triangles and triangle count using $O(1/\delta)$ MPC rounds, $O(n^{\delta})$ space per machine
    for any constant $\delta > 0$, and $O(m \alpha)$ total space where $\alpha$ is the arboricity of the graph.
\end{theorem}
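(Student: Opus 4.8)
The plan is to verify the three resource bounds --- round complexity, space per machine, and total space --- by walking through \cref{alg:triangle-count} line by line, charging each step to the MPC primitives whose costs are already quoted, and using \cref{lem:bound-min-deg} to control the one place where the work genuinely blows up, namely the wedge enumeration.

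\textbf{Round complexity.} First I would observe that every numbered line is either (a) a direct call to one of the six listed MPC primitives, each of which runs in $O(1/\delta)$ rounds, or (b) a constant amount of local bookkeeping on a single machine (computing $\argmin$, forming wedges, constructing query tuples), which costs no communication. The only subtlety is \cref{line:partition}--\cref{line:partition-to-machine}, the partitioning of each low-degree adjacency list into $n^\delta$-sized chunks and their distribution across machines; the text already sketches how to implement this in $O(1/\delta)$ rounds via \textsc{MPC-CountDuplicates} (to get the multiplicities $c_u$), \textsc{MPC-Duplicate}, a tagging step, and \textsc{MPC-Sort}. Since there are a constant number of such phases composed sequentially, the total is $O(1/\delta) = O(1)$ rounds for constant $\delta$.

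\textbf{Space per machine.} Here I would argue that each primitive is invoked on inputs that fit the $O(n^\delta)$-per-machine regime by construction: the duplicated edge set, the sorted edge set, the degree vector $D$, and the tagged list are all distributed objects, and every primitive is exactly the kind the survey/\cite{Goodrich11,BELMR20} guarantee in $O(n^\delta)$ space. The one place requiring care is that a single vertex $w$ may have $\deg(w) \gg n^\delta$; this is precisely why we split $SortedE'[w]$ into $\ceil{\deg(w)/n^\delta}$ partitions of size at most $n^\delta$ and ship each to its own machine together with the single edge $e$. Thus each wedge-forming machine holds $O(n^\delta)$ edges plus one, forms $O(n^\delta)$ wedges, and emits $O(n^\delta)$ queries --- all within budget. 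The final \textsc{MPC-Filter} is again a quoted primitive on a distributed query set.

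\textbf{Total space} is the crux, and the argument runs through \cref{lem:bound-min-deg}. Every intermediate object has size proportional either to $m$ (the duplicated/sorted/tagged edge lists, the degree vector) or to the number of generated wedges/queries. The number of wedges generated is $\sum_{e=(u,v)\in E} \deg(w_e)$ where $w_e$ is the smaller-degree endpoint, i.e.\ exactly $\sum_{(u,v)\in E} \min(\deg(u),\deg(v)) \le 2m\alpha$ by \cref{lem:bound-min-deg}. (The partitioning into chunks adds only $O(1)$ overhead per chunk, hence at most $O(m + m\alpha) = O(m\alpha)$ total across all chunks, since there are at most $\sum_e (1 + \deg(w_e)/n^\delta) = O(m + m\alpha)$ of them.) The query set $Q$ and the filtered set $T$ are each no larger than the wedge count, so $O(m\alpha)$. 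Summing: $O(m) + O(m\alpha) = O(m\alpha)$ total space. \textbf{The main obstacle} I anticipate is not any single inequality but making the charging airtight at \cref{line:partition}: one must confirm that the duplication-and-tag implementation of the partition step does not itself create more than $O(m\alpha)$ duplicated edges --- which it does not, because edge $\{u,v\}$ is duplicated $c_u + c_v$ times where $c_u$ counts how often $u$ is a \emph{lower-degree} endpoint, and $\sum_u c_u \le m$ trivially while the tagged copies that survive to be shipped number $\sum_e \ceil{\deg(w_e)/n^\delta} = O(m\alpha)$ --- and that the hidden constants in the six primitives compose without introducing any $n$-dependence.
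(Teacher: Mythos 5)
Your accounting of the three resource bounds is sound and follows essentially the same route as the paper: a constant number of $O(1/\delta)$-round primitives for the round count, the $n^\delta$-sized chunking of each low-degree adjacency list for the per-machine space, and \cref{lem:bound-min-deg} to bound the total number of wedges (and hence queries) by $\sum_{(u,v)\in E}\min(\deg(u),\deg(v)) \le 2m\alpha$ for the total space. You are in fact somewhat more careful than the paper at \cref{line:partition}, where the duplication-and-tag implementation must itself be charged; the relevant sum there is $\sum_u c_u\deg(u) = \sum_e \min(\deg(u),\deg(v)) \le 2m\alpha$, which is the same quantity the lemma controls, so your conclusion is right even if your bookkeeping ($\sum_u c_u \le m$) is not quite the sum that matters.

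However, there is a genuine gap: the theorem claims that \cref{alg:triangle-count} \emph{outputs the set of triangles and the triangle count}, and you prove only the resource bounds, announcing at the outset that your plan is to verify those three quantities. The paper's proof devotes its first half to correctness, and that part is not merely boilerplate. One must argue (i) that every triangle is found --- for each triangle $(a,b,c)$ and each of its three edges $e$, the wedge joining $e$ to the third vertex appears in the enumeration over the adjacency list of $e$'s smaller-degree endpoint, and the completing edge passes the filter; and (ii) that each triangle is found \emph{exactly three times}, once initiated by each of its edges, because a fixed edge $e$ of the triangle generates exactly one wedge containing the third vertex (the queries being tagged with $e$ and the machine precisely so that distinct wedges yield distinct, non-collapsing queries). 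Point (ii) is what justifies returning $|T|/3$ as an exact count rather than an overcount or undercount. Without this paragraph your proof establishes that the algorithm is cheap, but not that it is correct; you should add the counting argument before the resource analysis.
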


\begin{proof}
    We first prove that our algorithm enumerates the set of triangles in the input graph. For each edge $e$, 
    we enumerate all wedges formed using $e$ and the adjacency list of the smaller degree endpoint. For each triangle, 
    $(a, b, c)$, three wedges are enumerated, one initiated by each of the edges in the triangle. Then, for each wedge, 
    we check whether the edge that completes the wedge into a triangle exists in the graph. If it exists, then the wedge
    is returned as a query that is a triangle. The triangle can be obtained from $(a, c)$ and $e$, both contained in the query.
    Since each edge of $(a, b, c)$ initiates the formation of exactly one wedge, the set of enumerated queries is exactly 
    three times the number of distinct triangles. 

    We now show that our algorithm takes $O(1/\delta)$ MPC rounds, $O(n^{\delta})$ space per machine for any constant $\delta > 0$,
    and $O(m \alpha)$ total space where $\alpha$ is the arboricity of the graph. All of our primitives satisfy these measures. 
    We call a constant number of primitives; hence the number of rounds necessary to run all of the primitives is $O(1/\delta)$. Then, 
    sending $e$ and the associated partitions each into a separate machine takes one round. Since the adjacency list is partitioned into $n^{\delta}$
    sized chunks, the total space necessary per machine is $O(n^{\delta})$. Finally, $O(n^{\delta})$ queries are formed per machine. And by~\cref{lem:bound-min-deg},
    at most $O(m\alpha)$ total queries are formed. Thus, the total space usage is $O(m\alpha)$, the number of rounds is $O(1/\delta)$, and space per machine 
    is $O(n^{\delta})$ for any constant $\delta > 0$.
\end{proof}

\subsection{Lower Bound on Number of Rounds}

In this section, we show that our number of rounds is optimal. 
There exists a simple lower bound of $\Omega(1/\delta)$ rounds for triangle counting when the initial graph is partitioned
across multiple machines. Suppose we have multiple disjoint subgraphs partitioned across multiple machines. We show that in this 
worst case setting, computing the number of triangles in the input graph requires $\Omega(1/\delta)$ rounds.
This means that our bounds are tight up to constant factors. 

\begin{lemma}
    Counting the number of triangles in an input graph with $m$ edges and $n < m$ vertices requires $\Omega(1/\delta)$ rounds when 
    the space per machine is $n^{\delta}$ and total space is $m\alpha$.
\end{lemma}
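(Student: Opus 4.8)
The plan is to exhibit a family of hard instances where the input is spread across many machines so that information about a single triangle must traverse a long communication path before any machine can detect it. Concretely, I would take a graph that is a disjoint union of $\Theta(m/n^{\delta})$ tiny gadgets, where each gadget either contains exactly one triangle or contains a single non-triangle path (a wedge whose closing edge is absent), and the two cases are indistinguishable to any single machine. Since each machine holds only $n^{\delta}$ words, the edges of a single gadget — and, more importantly, the \emph{global} partial counts — cannot all be collected onto one machine in fewer than $\Omega(\log_{n^\delta}(m/n^\delta)) = \Omega(1/\delta)$ rounds, because in $r$ rounds a machine can be influenced by at most $(n^\delta)^{O(r)}$ of the original input words (the standard "influence/routing" argument for MPC round lower bounds).

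The key steps, in order, are as follows. First, I would set up the reduction formally: let the answer to triangle counting determine the parity (or exact value) of the sum of gadget-indicator bits, so that computing the count is at least as hard as computing an $\mathrm{OR}$/summation of $N := \Theta(m/n^\delta)$ bits that are initially scattered one-per-machine (or a constant number per machine). Second, I would invoke the known round lower bound for computing such a globally-distributed aggregate in the MPC model with space $S = n^\delta$ per machine: after $r$ rounds, the output machine's state is a function of at most $S^{O(r)} = n^{O(\delta r)}$ input items, so to depend on all $N$ items we need $n^{O(\delta r)} \ge N = \Omega(m/n^\delta)$, i.e. $r = \Omega\!\left(\frac{\log(m/n^\delta)}{\delta \log n}\right)$. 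Using $n < m$ (and, say, $m \le \poly(n)$, or more carefully $m \ge n^{1+c}$ for some constant $c$, which is exactly the regime where the problem is interesting), this simplifies to $r = \Omega(1/\delta)$. Third, I would check the space accounting: the total number of edges is $m$, the arboricity of this disjoint union of constant-size gadgets is $O(1) \le \alpha$, and each gadget uses $O(1)$ space, so total space $O(m) \le m\alpha$ and per-machine space $n^\delta$ are respected — the instance lies squarely in the stated model.

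The main obstacle is making the "$r$ rounds $\Rightarrow$ output depends on $\le S^{O(r)}$ inputs" step fully rigorous in the MPC model, since machines can send and receive up to $S$ words per round and the communication pattern is adaptive; one must argue that the dependency graph of which input items can influence a given machine grows by at most a factor $S^{O(1)}$ per round, independent of adaptivity, because a machine in round $t+1$ reads at most $S$ words, each of which was produced by a machine whose state already depended on at most $S^{O(t)}$ inputs. This is a known argument (it underlies the standard unconditional $\Omega(\log_S n)$-round lower bounds for connectivity-type problems), so I would cite it rather than reprove it, and simply instantiate it with the disjoint-gadget construction above. A secondary subtlety is ensuring the adversarial \emph{initial partition} genuinely forces the bits to be far apart — this is immediate since we are free to place each gadget on its own machine — and that the reduction accounts correctly for the division by $3$ and the fact that the algorithm also outputs the triangle set, neither of which changes the asymptotics.
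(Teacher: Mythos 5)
Your proposal is correct and takes essentially the same route as the paper: a fan-in/aggregation lower bound showing that information from $\Theta(m/n^{\delta})$ machines, each holding an independent contribution to the count, cannot funnel into a single machine of capacity $n^{\delta}$ in fewer than $\log_{n^{\delta}}(m/n^{\delta}) = \Omega(1/\delta)$ rounds. In fact your version is somewhat more rigorous than the paper's, which simply asserts that the per-machine counts ``must be aggregated'' via a tree of fan-in $n^{\delta}$; your triangle-versus-open-wedge gadgets and the explicit $S^{O(r)}$ influence argument supply the hard-instance construction that justifies that assertion (modulo the minor accounting point that constant-size gadgets should either be padded to $\Theta(n^{\delta})$ edges each or placed $n^{\delta}$ per machine so that the instance really has $m$ edges).
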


\begin{proof}
    Suppose that each of $m/n^{\delta}$ machines contains the edges of a disjoint subgraph with number of triangles unknown to the other machines. 
    Thus, in order to obtain the triangle count, we must aggregate the counts of the triangles on each individual machine to one machine. Since 
    each machine has space $n^{\delta}$, each machine can receive only up to $n^{\delta}$ counts from other machines. The problem of aggregating
    the counts onto one machine then reduces to the problem of constructing constructing a tree with $m/n^{\delta}$ leaves and where every 
    internal node has degree at most $n^{\delta}$. The height of the tree is then the minimum number of rounds required by any triangle counting algorithm.
    Such a tree must have height $\log_{n^{\delta}}(m/n^{\delta}) = \log_{n^{\delta}}(m) - 1 = \Omega(1/\delta)$. 
\end{proof}

\section{Open Questions}
The remaining open question is to show that counting the number of triangles, even when we allow for a $(1+\eps)$-approximation, 
can be done in $O(1)$ rounds in near-linear or sublinear space per machine, 
without any assumptions on the minimum number of triangles that are present in the graph and without sparsity assumptions
for the graph. For very dense graphs, one can potentially use matrix multiplication techniques. Thus, the interesting
setting is when the graph has arboricity $\alpha = \omega(\poly(\log n))$ but has $o(n^2)$ edges.
Another interesting question is to extend our results to other types of subgraphs beyond triangles.

\appendix
\section{Proof of Lemma~\ref{lem:bound-min-deg}}\label{app:chiba-nishizeki}
\begin{proof}
    Consider the arboricity decomposition of graph $G = (V, E)$. Let $F_1, \dots, F_\alpha$ be the $\alpha$ forests of the 
    decomposition. Suppose we pick an arbitrary root for each tree in each forest $F_i$. Then, we orient the edges from the root
    to the children (toward the leaves). We assign each edge to the node that it is oriented towards. Then, each node in forest $F_i$
    has at most one edge assigned to it. We denote the vertex that edge $e$ is directed towards by $to(e)$.
    Then, we can show that the sum of the minimum degrees is as follows:

    \begin{align}
        \sum_{(u, v) \in E} \min(\deg(u), \deg(v)) &\leq \sum_{1 \leq i \leq \alpha}\sum_{e \in F_i} \deg(to(e)) \label{eq:sum_to_e}\\
        &\leq \sum_{1 \leq i \leq \alpha}\sum_{v \in V} \deg(v) \label{eq:sum_deg_v} \\
        &\leq 2m\alpha.
    \end{align}

    \cref{eq:sum_to_e} follows because $\deg(to(e)) \geq \min(\deg(u), \deg(v))$ where $e = \{u, v\}$.
    \cref{eq:sum_deg_v} follows because each vertex has at most one edge associated with it; since each vertex has at most one
    edge associated with it, it holds that $\deg(to(e))$ of vertex $to(e)$ is counted at most once per forest $F_i$.
\end{proof}

\bibliographystyle{alpha}
\bibliography{ref}

\end{document}